\declaretheorem[style=definition,name=Definition,qed=$\blacksquare$]{definition}
\declaretheorem[style=definition,name=Example,qed=$\blacksquare$]{example}
\declaretheorem[style=definition,name=Special Case,qed=$\blacksquare$]{specialblock}
\declaretheorem[style=definition,name=Remark,qed=$\blacksquare$]{remark}
\declaretheorem[style=definition,name=Proposition,qed=$\blacksquare$]{proposition}
\declaretheorem[style=plain,name=Theorem,qed=$\blacksquare$]{theorem}
\newcommand{\R}{\mathbb{R}}
\newcommand{\eR}{\overline{\R}}
\newcommand{\X}{\mathcal{X}}
\newcommand{\T}{\mathcal{T}}
\newcommand{\W}{\mathcal{W}}
\newcommand{\rect}[1]{\llbracket #1 \rrbracket}
\newcommand{\change}[1]{{\color{black} #1}}
\title{\LARGE {\bf Tight Decomposition Functions for Continuous-Time Mixed-Monotone Systems with Disturbances}}
\author{Matthew Abate, Maxence Dutreix, and Samuel Coogan
\thanks{This work was partially supported by the Air Force Office of Scientific Research under grant FA9550-19-1-0015.}
\thanks{
\change{
M. Abate, M. Dutreix, and S. Coogan are with the School of Electrical and Computer Engineering, Georgia Institute of Technology, Atlanta, 30332, USA {\tt\small $\{$Matt.Abate,\, Maxdutreix,\, Sam.Coogan$\}$ @GaTech.edu}. M. Abate is also with the School of Mechanical Engineering and S. Coogan is also with the School of Civil and Environmental Engineering.}}
}
\begin{document}

\maketitle
\thispagestyle{empty}
\pagestyle{empty}

\begin{abstract}
The vector field of a mixed-monotone system is decomposable via a decomposition function into increasing (cooperative) and decreasing (competitive) components, and this decomposition allows for, \emph{e.g.}, efficient computation of reachable sets and forward invariant sets.
A main challenge in this approach, however, is identifying an appropriate decomposition function.
In this work, we show that any continuous-time dynamical system with a Lipschitz continuous vector field is mixed-monotone, 
\change{
and we provide a construction for the decomposition function that yields the tightest approximation of reachable sets when used with the standard tools for mixed-monotone systems.
}
Our construction is similar to that recently proposed by Yang and Ozay for computing decomposition functions of discrete-time systems \cite{ozaything} where we make appropriate modifications for the continuous-time setting and also extend to the case with unknown disturbance inputs. As in \cite{ozaything}, our decomposition function construction requires solving an optimization problem for each point in the state-space; however, we demonstrate through example how tight decomposition functions can \change{sometimes} be calculated in closed form.  As a second contribution, we show how under-approximations of reachable sets can be efficiently computed via the mixed-monotonicity property by considering the \change{backward-time} dynamics. 
\end{abstract}

\section{Introduction}
Mixed-monotone systems are characterized by vector fields that are decomposable into increasing (cooperative) and decreasing (competitive) interactions. This allows for embedding the system dynamics into a higher dimensional system with twice as many states but for which the dynamics are monotone \cite{nonmonotone, kulenovic2006global, gouze1994monotone}\change{; an approach that is similar in spirit is first pioneered in \cite{smale1976differential}.}  Thus, decomposing the system dynamics enables one to apply the powerful theory of monotone dynamical systems to the higher dimensional \emph{embedding} system to conclude properties of the original system. For example, mixed-monotonicity allows for: efficiently approximating reachable sets by evaluating only one trajectory of the embedding system \cite{Coogan:2016, Invariance4MM}; identifying forward invariant and attractive sets by identifying equilibria in the embedding space \cite{Invariance4MM}; concluding global asymptotic stability by proving the nonexistence of equilibria of the embedding system except in a certain lower dimensional subspace \cite{smith2008global}.  See also \cite{smith2008monotone, monotonicity} for fundamental results on monotone dynamical systems.

A primary challenge in applying the theory of mixed-monotone systems is in identifying an appropriate decomposition function. There exists certain special cases for which a decomposition function can be readily identified, \emph{e.g.}, when each off-diagonal entry of the systems Jacobian matrix is uniformly upper or lower bounded \cite{7799445, SuffMM, TIRA, LTLandMM}, however, identifying decomposition functions generally relies on domain knowledge of the underlying physical system.

The question of existence of decomposition functions was recently explored in \cite{ozaything} in the discrete-time setting. In discrete-time, a decomposition function for an update map $F$ leads to an embedding system that \change{over-approximates} the image of $F$ when evaluated on a hyperrectangular set.  
\change{
It is observed in \cite{ozaything} that all discrete-time systems are mixed-monotone with a decomposition function that tightly approximates one-step reachable sets; this construction, however, fails to provide tight approximations for longer time horizons.}
While this result is constructive in that it provides an explicit decomposition function construction applicable to all discrete-time systems, evaluating the decomposition function at any point in the embedding space requires computing a reachable set itself.  
Nonetheless, knowing that a decomposition function does exist means that a search directed by, \emph{e.g.}, domain expertise, is not generally unreasonable.

In this paper, we study an analogous question regarding existence of decomposition functions in the continuous-time setting, and we additionally consider systems with disturbances. Our main result is to show that any continuous-time system possessing a vector field that is Lipschitz continuous in state and disturbance admits a Lipschitz continuous decomposition function.  
\change{
Moreover, we provide a construction for the decomposition function that yields the tightest possible approximations when used with the standard tools for mixed-monotone systems.
}
Thus, our results complement those from \cite{ozaything} by answering similar questions in the continuous-time setting, however, we emphasize that the results and tools here are different as compared to the discrete-time setting of \cite{ozaything}. In particular, unlike decomposition functions for continuous-time systems, decomposition functions for discrete-time systems do not need to be Lipschitz continuous, or even continuous. 
Moreover, we allow for disturbance inputs and define a different notion of tightness to accommodate the fact that it is generally not possible to obtain tight hyperrectangular reachable set approximations in continuous-time over any horizon. As in \cite{ozaything}, our construction is defined as an optimization problem, and thus not practically useful for applications other than system verification via simulation \cite{maxverifacation, dutreix2019specificationguided}.  However, we demonstrate through examples how tight decomposition functions can be \change{calculated} in closed form in certain instances.

As a second contribution, we show how under-approximations of reachable sets can be efficiently computed from a decomposition function for the backward-time dynamics.  Mixed-monotonicity in the backward-time setting was first considered in \cite{Invariance4MM} where it is shown how finite-time backward reachable sets can be approximated using an analogous technique to that of the forward-time case.  Here, we extend these results and specifically show that (a) a backward-time decomposition function can be used to compute under-approximations of forward reachable sets, (b) in certain instances, a tight backward-time decomposition function can be efficiently derived from a tight forward-time decomposition function, and (c) a tight backward-time decomposition function  provides the tightest, in a certain sense, under-approximations of forward reachable sets.

\section{Notation}\label{sec:notation}
Let $\mathbb{R}_{\geq 0}$ and $\mathbb{R}_{\leq 0}$ denote the nonnegative and nonpositive real numbers respectively. Let $\eR := \R\cup \{-\infty,\infty\}$ denote the extended real numbers.
Let $x_i$ for $i \in \{1,\cdots, n\}$ denote the $i^{\text{th}}$ entry of $x\in \R^n $.

Let $(x,\, y)$ denote the vector concatenation of $x,\, y \in \R^n$, \emph{i.e}. $(x,\,y) := [x^T \, y^T]^T \in \R^{2n}$, and let $\preceq$ denote the componentwise vector order, \emph{i.e.} $x\preceq y$ if and only if $x_i \leq y_i$ for all $i$.
Given $x, y\in\R^n$ with $x\preceq y$, 
\begin{equation*}
[x,\,y]:=\left\{z\in \R^n \,\mid\, x \preceq z \text{ and } z \preceq y\right \}
\end{equation*}
denotes the hyperrectangle defined by the endpoints $x$ and $y$.
We also allow $x_i,\,y_i\in\eR$ so that $[x,\,y]$ defines an \emph{extended hyperrectangle}, that is, a hyperrectangle with possibly infinite extent in some coordinates, \change{where componentwise order is extended to $\eR$ in the conventional way, \emph{i.e.}, $-\infty\leq x_i\leq \infty$ for all $x_i\in\eR$}.  
Given $a=(x,\,y) \in \R^{2n}$ with $x \preceq y$, we denote by $\rect{a}$ the hyperrectangle formed by the first and last $n$ components of $a$, \emph{i.e.},  $\rect{a}:=[x,\,y]$.

Let $\preceq_{\rm SE}$ denote the \emph{southeast order} on $\eR{}^{2n}$ defined by
\begin{equation*}
(x,\, x') \preceq_{\rm SE} (y,\, y')
 \:\: \Leftrightarrow  \:\:  x \preceq y\text{ and } y' \preceq x'
\end{equation*}
where $x,\, y,\, x',\, y' \in \eR{}^n$.  In the case that $x \preceq x'$ and $y \preceq y'$, note that
\begin{equation}
\label{eq:order_to_box}
(x,\, x') \preceq_{\rm SE} (y,\, y')
 \:\: \Leftrightarrow  \:\:
[\,y,\, y'\,] \subseteq [\,x,\, x'\,].
\end{equation}

\section{Preliminaries}\label{sec:preliminaries}
We consider the system 
\begin{equation}\label{eq1}
    \dot{x} = F(x,\, w)
\end{equation}
with state $x \in \X \subset \R^n$ and 
\change{time-varying disturbance input $w(t)\in \W \subset \R^m$.
We assume that the vector field $F:\X\times \W \rightarrow \R^n$ is locally Lipschitz continuous and that the disturbance signal $\mathbf{w}:\R \rightarrow \W$ is piecewise continuous, so that solutions to \eqref{eq1} are unique.
We also} assume that $\X$ is an extended hyperrectangle with nonempty interior and $\W$ is a hyperrectangle defined by $\W := [\underline{w},\, \overline{w}]$ for some $\underline{w},\, \overline{w} \in \R^m$ with $\underline{w} \preceq \overline{w}$.  Let $\Phi(T;\, x,\, \mathbf{w}) \in \X$ denote the state of \eqref{eq1}, reached at time $T$ when starting at the initial state $x \in \X$ at time $0$ and when subjected to the disturbance input $\mathbf{w}: [0,\, T]\rightarrow \W$. 
\change{
We allow for finite-time escape so that $\Phi(T;\, x,\, \mathbf{w})$ need not exist for all $T$, however, $\Phi(T;\, x,\, \mathbf{w})$ is understood to exist only when $\Phi(t;\, x,\, \mathbf{w}) \in \X$ for all $0\leq t \leq T$, and statements involving $\Phi(T;\, x,\, \mathbf{w})$ are understood to apply only when $\Phi(T;\, x,\, \mathbf{w})$ exists.
}

In this work, we are specifically interested in mixed-monotone systems.  Define by
\begin{equation}
    \begin{split}
        \T_{\X} &:= \{(x,\, \widehat{x}) \in \X \times \X  \:\vert\: x\preceq \widehat{x}\}, \\
        \T_{\W} &:= \{(w,\, \widehat{w}) \in \W \times \W  \:\vert\: w\preceq \widehat{w}\}, \\
    \end{split}
\end{equation}
the sets of ordered points in $\X$ and $\W$, respectively.  Additionally, define
\begin{equation}
\begin{split}
    \T &:= \{(x,\,w,\, \widehat{x},\, \widehat{w}) \in \X \times \W \times \X\times \W  \:\vert\: \\
        &\hspace{1.75cm} (x,\, \widehat{x}) \in \T_{\X} \text{ and } (w,\, \widehat{w}) \in \T_{\W}, \text{ or }  \\
        &\hspace{1.75cm} (\widehat{x},\, x) \in \T_{\X} \text{ and } (\widehat{w},\, w) \in \T_{\W} \}.
\end{split}
\end{equation}

\begin{definition}\label{def1}
Given a locally Lipschitz continuous function $d : \T \rightarrow \R^n$, the system \eqref{eq1} is \textit{mixed-monotone with respect to $d$}  if
\begin{enumerate}
    \item For all $x \in \X$ and all $w \in \W$ we have $d(x,\,w,\, x,\, w) = F(x,\, w)$.
    \item For all $i,\, j \in \{1,\, \cdots,\, n\}$, with $i \neq j$, we have $\frac{\partial d_i}{\partial x_j}(x,\,w,\, \widehat{x},\, \widehat{w}) \geq 0$ for all $(x,\,w,\, \widehat{x},\, \widehat{w}) \in \T$ such that $\frac{\partial d}{\partial x}$ exists.
    \item For all $i,\, j \in \{1,\, \cdots,\, n\}$, we have $\frac{\partial d_i}{\partial  \widehat{x}_j}(x,\,w,\,\widehat{x},\, \widehat{w}) \leq 0$ for all $(x,\,w,\, \widehat{x},\, \widehat{w}) \in \T$ such that $\frac{\partial d}{\partial \widehat{x}}$ exists.
    \item For all $i\in \{1,\, \cdots,\, n\}$ and all $k\in \{1,\, \cdots,\, m\}$, we have $\frac{\partial d_i}{\partial  w_j}(x,\,w,\,\widehat{x},\, \widehat{w}) \geq 0 \geq \frac{\partial d_i}{\partial  \widehat{w}_j}(x,\,w,\,\widehat{x},\, \widehat{w})$ for all $(x,\,w,\, \widehat{x},\, \widehat{w}) \in \T$ such that $\frac{\partial d}{\partial w}$ and $\frac{\partial d}{\partial \widehat{w}}$ exist.
    \qedhere
\end{enumerate}
\end{definition}

\change{
If \eqref{eq1} is mixed-monotone with respect to $d$, $d$ is said to be a \emph{decomposition function} for \eqref{eq1}, and derivatives of $d$ exist almost everywhere because $d$ is assumed Lipschitz.
Definition \ref{def1} is the standard definition of mixed-monotonicity for continuous-time dynamical systems and appears in, \emph{e.g.} \cite{Coogan:2016}, however, we make certain important generalisations here:
Previous works, \emph{e.g.}, \cite{Coogan:2016, SuffMM, TIRA}, consider decomposition functions with domain $\X\times\W\times\X\times\W$, however, we observe that the standard analysis tools for mixed-monotone systems---including those provided later---only require that $d$ be defined on $\T$. Thus, in Definition \ref{def1} we restrict the domain of $d$ to $\T$ without compromising the usefulness of the mixed-monotonicity property.
Additionally, we note that \cite{SuffMM} unnecessarily strengthens the conditions for continuous-time mixed-monotonicity and requires that Condition 2 from Definition 2 hold even in the case where $i = j$, and \cite{TIRA, LTLandMM} define mixed-monotone systems to be systems whose state and disturbance Jacobian matrices are uniformly bounded over the system domain. These are special cases of the more general conditions presented here.
}

\change{
\begin{remark}
\label{rem:kamke}
The above definition of mixed-monotonicity is in terms of the derivatives of the decomposition function $d$. By integrating, it can be shown that conditions 2--4 of Definition \ref{def1} are equivalent to the following two conditions:

\begin{enumerate}
    \item[C1)] For all $i\in\{1,\cdots,n\}$, 
      \begin{equation}
        \label{eq:8}
        d_i(x,\,w,\,\widehat{x},\,\widehat{w})\leq d_i(y,\,v,\,\widehat{x},\,\widehat{w})
      \end{equation}
      for all $(x,\, w,\, \widehat{x},\, \widehat{w}) \in \T$ and all $(y,\, v) \in \X \times \W$ such that  $(y,\,v,\,\widehat{x},\,\widehat{w}) \in \T$, $x\preceq y$, $x_i=y_i$ and $w \preceq v$.
    \item[C2)] For all $i\in\{1,\cdots,n\}$,
      \begin{equation}
        \label{eq:13}
        d_i(x,\,w,\,\widehat{y},\,\widehat{v}) \leq d_i(x,\,w,\,\widehat{x},\,\widehat{w})
      \end{equation}
      for all $(x,\, w,\, \widehat{x},\, \widehat{w}) \in \T$ and all $(\widehat{y},\, \widehat{v}) \in \X \times \W$ such that $(x,\,w,\,\widehat{y},\,\widehat{v}) \in \T$, $\widehat{x}\preceq \widehat{y}$ and  $\widehat{w}\preceq \widehat{v}$.
\end{enumerate}
These are the so-called \emph{Kamke} conditions for monotonicity, modified for the mixed-monotone setting \cite[Section 3]{smith2008monotone}.
\end{remark}
}

Construct 
\begin{equation}
\label{embedding}
    \begin{bmatrix}
    \dot{x} \\ \dot{\widehat{x}}
    \end{bmatrix}
    = e(x,\, \widehat{x}) = 
    \begin{bmatrix}
    d(x,\,\underline{w},\, \widehat{x},\,\overline{w}) \\ d(\widehat{x},\,\overline{w},\, x,\, \underline{w})
    \end{bmatrix}
\end{equation}
with state $(x,\, \widehat{x}) \in \T_{\X}$.
We refer to \eqref{embedding} as the \textit{embedding system relative to $d$} and $e$ the \textit{embedding function relative to $d.$}  Let $\Phi^e(T;\, (x,\, \widehat{x}))$ be the state transition function for \eqref{embedding}, that is, $\Phi^e(T;\, (x,\, \widehat{x}))$ denotes the state of \eqref{embedding} at time $T$ when initialised at state $(x,\, \widehat{x})\in \T_{\X}$. 
\change{
Trajectories of the embedding system may leave $\X\times \X$, and this is true even when 
$\Phi(t;\, x_0,\, \mathbf{w}) \in \X$ for all $t\geq 0$, all $x_0\in X$, and all $\mathbf{w}: \R \rightarrow \W$.
However, trajectories of \eqref{embedding} will not evolve from $\T_{\X}$ to $(\X\times \X)\backslash \T_{\X}$ \cite[Lemma 1]{Invariance4MM}.
}
\change{For this reason}, $\Phi^e(T;\, (x,\, \widehat{x}))$ is understood to exist only when $\Phi^e(t;\, (x,\, \widehat{x}))\in \T_{\X}$ for all $0\leq t \leq T$, and statements involving $\Phi^e(T;\, (x,\, \widehat{x}))$ are understood to apply only when $\Phi^e(T;\, (x,\, \widehat{x}))$ exists.  Importantly, \eqref{embedding} is \emph{monotone with respect to the southeast order}; that is, for all $(x,\, \widehat{x}),\, (y,\, \widehat{y})\in \T_{\X}$ and all $T \geq 0$ we have
\begin{equation*}
    (x,\, \widehat{x}) \preceq_{\rm SE} (y,\, \widehat{y}) \Rightarrow \Phi^e(T;\, (x,\, \widehat{x})) \preceq_{\rm SE} \Phi^e(T;\, (y,\, \widehat{y})).
\end{equation*}

We next recall how reachable sets for \eqref{eq1} are over-approximated by trajectories of \eqref{embedding}.  To that end, denote by
\begin{multline}\label{fwd_reach}
    R^{F}(T;\, \X_0) = \{ \Phi(T;\, x,\, \mathbf{w}) \in \X \:\vert\: x \in \X_0,\, \\
    \text{for some }\mathbf{w}:[0,\, T]\rightarrow \W\}
\end{multline}
the forward reachable set of \eqref{eq1} over the time horizon $T$ from the set of initial conditions $\X_0 \subset \X$.
The following fundamental result connects reachable sets to the dynamics of the embedding system \cite{Coogan:2016, TIRA}.

\begin{proposition}\label{prop:p1} 
 Let \eqref{eq1} be mixed-monotone with respect to $d$, and let $\X_0 = [\underline{x},\, \overline{x}]$ for some $\underline{x}\preceq \overline{x}$.
Then $R^{F}(T;\, \X_0) \subseteq \rect{\Phi^{e}( T;\, (\underline{x},\, \overline{x}))}$.
\end{proposition}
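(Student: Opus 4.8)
The plan is to reduce the set containment to a pointwise trajectory comparison and then invoke a comparison principle for the southeast-monotone embedding system. Fix an arbitrary initial condition $x_0 \in \X_0 = [\underline{x},\,\overline{x}]$ and an arbitrary admissible disturbance signal $\mathbf{w}:[0,\,T]\to\W$, and write $\phi(t) := \Phi(t;\,x_0,\,\mathbf{w})$ for the resulting trajectory of \eqref{eq1}. Since $R^F(T;\,\X_0)$ is exactly the union of the points $\phi(T)$ over all such $x_0$ and $\mathbf{w}$, it suffices to prove that $\phi(T)\in\rect{\Phi^e(T;\,(\underline{x},\,\overline{x}))}$ for each fixed choice. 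I would establish the stronger pointwise statement that, writing $(\underline{z}(t),\,\overline{z}(t)) := \Phi^e(t;\,(\underline{x},\,\overline{x}))$, one has $\underline{z}(t)\preceq \phi(t)\preceq \overline{z}(t)$ for all $t\in[0,\,T]$.

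The key idea is to lift $\phi$ to the diagonal trajectory $(\phi(t),\,\phi(t))\in\T_{\X}$ and show that it is a \emph{supersolution} of the embedding system relative to the southeast order. At each time $t$ the disturbance value satisfies $\underline{w}\preceq\mathbf{w}(t)\preceq\overline{w}$, and $\dot{\phi} = F(\phi,\,\mathbf{w}(t)) = d(\phi,\,\mathbf{w}(t),\,\phi,\,\mathbf{w}(t))$ by Condition 1 of Definition \ref{def1}. Comparing this with the two blocks of $e(\phi,\,\phi)$ and applying the Kamke conditions C1 and C2 of Remark \ref{rem:kamke} — using C1 to raise the first disturbance slot (from $\underline{w}$ up to $\mathbf{w}(t)$, resp.\ from $\mathbf{w}(t)$ up to $\overline{w}$) and C2 to lower the second disturbance slot — yields the chain of inequalities
\begin{equation*}
d(\phi,\,\underline{w},\,\phi,\,\overline{w}) \preceq d(\phi,\,\mathbf{w}(t),\,\phi,\,\mathbf{w}(t)) = \dot\phi \preceq d(\phi,\,\overline{w},\,\phi,\,\underline{w}).
\end{equation*}
In other words, the first block of $\frac{d}{dt}(\phi,\,\phi) - e(\phi,\,\phi)$ is nonnegative and the second block is nonpositive, which is precisely the statement that $(\phi,\,\phi)$ evolves as a supersolution with respect to the southeast cone.

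It then remains to run the comparison. At $t=0$ the inclusion $x_0\in[\underline{x},\,\overline{x}]$ gives $(\underline{x},\,\overline{x})\preceq_{\rm SE}(\phi(0),\,\phi(0))$, so the embedding trajectory starts below the diagonal supersolution in the southeast order. Invoking the comparison principle for systems monotone with respect to the southeast order — exactly the quasimonotone/Kamke structure underlying the monotonicity property stated just above Proposition \ref{prop:p1} — propagates this ordering forward, giving $(\underline{z}(t),\,\overline{z}(t))\preceq_{\rm SE}(\phi(t),\,\phi(t))$ for all $t\in[0,\,T]$. By \eqref{eq:order_to_box} this is equivalent to $\phi(t)\in[\underline{z}(t),\,\overline{z}(t)]$; evaluating at $t=T$ and taking the union over $x_0$ and $\mathbf{w}$ completes the argument.

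I expect the main obstacle to be the comparison step rather than the algebraic verification of the supersolution property. Because $d$, and hence the embedding vector field $e$, is only Lipschitz continuous rather than $C^1$, the comparison between a genuine solution $(\underline{z},\,\overline{z})$ and the supersolution $(\phi,\,\phi)$ must be justified by a Lipschitz/quasimonotone version of the comparison theorem (e.g.\ via Müller's theorem or a Dini-derivative argument), while ensuring the trajectories remain in the domain $\T_{\X}$ on which the ordering and the embedding dynamics are defined. A secondary technical point is the existence of the relevant trajectories over $[0,\,T]$, which is handled by the standing conventions that $\Phi$ and $\Phi^e$ are asserted only where they exist and that \eqref{embedding} cannot cross from $\T_{\X}$ into $(\X\times\X)\setminus\T_{\X}$.
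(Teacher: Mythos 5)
The paper does not prove Proposition \ref{prop:p1} itself but imports it from \cite{Coogan:2016, TIRA}, and your argument is correct and is essentially the standard proof given there: lift the true trajectory to the diagonal $(\phi,\,\phi)$, use Condition 1 of Definition \ref{def1} together with the Kamke conditions C1--C2 to show it is a supersolution of \eqref{embedding} with respect to the southeast cone, and then invoke the quasimonotone comparison principle from the ordered initial condition $(\underline{x},\,\overline{x}) \preceq_{\rm SE} (\phi(0),\,\phi(0))$. Your identification of the solution-versus-supersolution comparison for a merely Lipschitz vector field (M\"uller/Dini-derivative arguments) as the only nontrivial technical step is accurate, and the directions of all the inequalities check out.
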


\change{The system 
\begin{equation}\label{backward}
    \dot{x} = -F(x,\, w)
\end{equation}
with $x \in \X$ and $w \in \W$ encodes the backward-time dynamics of \eqref{eq1}, and \eqref{eq1} and \eqref{backward} are related in the following way: if $x_1 = \Phi(T;\, x_0,\, \mathbf{w})$ for $\mathbf{w}:[0,\, T] \rightarrow \W$, then $x_0 = \Phi'(T;\, x_1,\, \mathbf{w}')$ where $\mathbf{w}'(t) := \mathbf{w}(T - t)$ and where $\Phi'$ denotes the state transition function of \eqref{backward}. 
Therefore, if \eqref{backward} is mixed-monotone, then finite-time backward reachable sets of \eqref{eq1} can be approximated using \cite[Proposition 2]{Invariance4MM} and this procedure is analogous to that of Proposition \ref{prop:p1}.}

\section{Tight Decomposition functions For Mixed-Monotone Systems}\label{sec:mainresult}
In this section, we show that all continuous-time dynamical systems with disturbances as in \eqref{eq1} are mixed-monotone, and we provide an explicit construction for the decomposition function that provides the tightest reachable set approximations via Proposition \ref{prop:p1}.

\begin{definition}[Tight Decomposition Function]\label{def:tightdecomp}
A decomposition function $\delta$ for \eqref{eq1} is \emph{tight} if for any other decomposition function $d$ for \eqref{eq1}, 
\begin{equation}\label{eq:tight}
    \begin{split}
        d(x,\,w,\,\widehat{x},\,\widehat{w}) &\preceq \delta(x,\,w,\,\widehat{x},\,\widehat{w}) \\
        \delta(\widehat{x},\,\widehat{w},\,x,\,w) &\preceq d(\widehat{x},\,\widehat{w},\,x,\,w)
    \end{split}
\end{equation}
for all $(x,\, \widehat{x}) \in \T_{\X}$ and all $({w},\, \widehat{w}) \in \T_{\W}$.
\end{definition}

\change{
As we show next, a tight decomposition function, when used with Proposition \ref{prop:p1}, will provide a tighter over-approximation of reachable sets than of any other decomposition function.
}

\begin{proposition}\label{prop:p2}
If $d$ is a decomposition function for \eqref{eq1} and $\delta$ is a tight decomposition function for \eqref{eq1}, then for all $t \geq 0$
\begin{equation}
\label{prop:res}
    \rect{\Phi^{\varepsilon}(t;\, (\underline{x},\, \overline{x}))} \subseteq \rect{\Phi^e(t;\, (\underline{x},\, \overline{x}))}
\end{equation}
for all $(\underline{x},\, \overline{x}) \in \T_{\X}$ where $\Phi^{\varepsilon}$ and $\Phi^e$ denote the state transition functions of the embedding system \eqref{embedding} constructed from $\delta$ and $d$, respectively.
\end{proposition}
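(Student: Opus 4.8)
The plan is to reduce the set-containment \eqref{prop:res} to a single ordering of the two embedding trajectories in the southeast order, and then to close the argument with a comparison principle for the monotone embedding dynamics. By \eqref{eq:order_to_box}, writing $\Phi^e(t;\,(\underline{x},\,\overline{x})) = (a,\,a')$ and $\Phi^{\varepsilon}(t;\,(\underline{x},\,\overline{x})) = (b,\,b')$, the desired containment $\rect{\Phi^{\varepsilon}(t;\,(\underline{x},\,\overline{x}))} \subseteq \rect{\Phi^e(t;\,(\underline{x},\,\overline{x}))}$ is equivalent to $\Phi^e(t;\,(\underline{x},\,\overline{x})) \preceq_{\rm SE} \Phi^{\varepsilon}(t;\,(\underline{x},\,\overline{x}))$. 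It therefore suffices to establish this southeast ordering of the two trajectories, both of which are initialised at the common point $(\underline{x},\,\overline{x}) \in \T_{\X}$.

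First I would establish the pointwise vector-field inequality $e(x,\,\widehat{x}) \preceq_{\rm SE} \varepsilon(x,\,\widehat{x})$ for every $(x,\,\widehat{x}) \in \T_{\X}$, where $e$ and $\varepsilon$ are the embedding functions built from $d$ and $\delta$ as in \eqref{embedding}. This is a direct consequence of tightness: for $(x,\,\widehat{x}) \in \T_{\X}$ both $(x,\,\underline{w},\,\widehat{x},\,\overline{w})$ and $(\widehat{x},\,\overline{w},\,x,\,\underline{w})$ lie in $\T$, so the first inequality of \eqref{eq:tight} (applied with $w = \underline{w}$, $\widehat{w} = \overline{w}$) gives $d(x,\,\underline{w},\,\widehat{x},\,\overline{w}) \preceq \delta(x,\,\underline{w},\,\widehat{x},\,\overline{w})$ for the first block, while the second inequality of \eqref{eq:tight} gives $\delta(\widehat{x},\,\overline{w},\,x,\,\underline{w}) \preceq d(\widehat{x},\,\overline{w},\,x,\,\underline{w})$ for the second block. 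Reading these two componentwise relations through the definition of $\preceq_{\rm SE}$ yields exactly $e(x,\,\widehat{x}) \preceq_{\rm SE} \varepsilon(x,\,\widehat{x})$.

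With the field inequality in hand, I would conclude by a monotone comparison argument. The embedding \eqref{embedding} relative to $d$ is monotone with respect to $\preceq_{\rm SE}$, which is equivalent to $e$ being quasimonotone (type-K) in the southeast order. The $\varepsilon$-trajectory $Z(t) := \Phi^{\varepsilon}(t;\,(\underline{x},\,\overline{x}))$ satisfies $\dot{Z} = \varepsilon(Z) \succeq_{\rm SE} e(Z)$, so it is a supersolution of the $e$-embedding, whereas $Y(t) := \Phi^{e}(t;\,(\underline{x},\,\overline{x}))$ is the exact solution with the same initial datum $Y(0) = Z(0)$. The comparison principle for quasimonotone systems (the Kamke--M\"uller theorem) then forces $Y(t) \preceq_{\rm SE} Z(t)$ on the common interval of existence, and translating back through \eqref{eq:order_to_box} delivers \eqref{prop:res}.

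I expect the main obstacle to be making the comparison step fully rigorous rather than the bookkeeping in the first two paragraphs. Three points require care: (i) the order is the non-standard southeast cone order, so I must confirm that the monotonicity of \eqref{embedding} recorded in the preliminaries does certify the quasimonotonicity hypothesis of the comparison theorem; (ii) the embedding fields are only locally Lipschitz, so I should appeal to a version of the comparison principle valid for continuous quasimonotone fields with unique solutions; and (iii) finite-time escape means the argument must be carried out on the common interval of existence. Here the containment itself helps, since $\rect{\Phi^{\varepsilon}(t)} \subseteq \rect{\Phi^e(t)} \subseteq \X$ confines the $\varepsilon$-trajectory to a compact subset of $\X$ and, together with the property that trajectories of \eqref{embedding} do not leave $\T_{\X}$, shows that the tighter trajectory exists at least as long as $\Phi^e$.
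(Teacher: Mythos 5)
Your proposal is correct and follows essentially the same route as the paper: reduce the containment \eqref{prop:res} to the southeast ordering of the two embedding trajectories via \eqref{eq:order_to_box}, use tightness \eqref{eq:tight} to get the pointwise field inequality $e \preceq_{\rm SE} \varepsilon$ on $\T_{\X}$, and propagate the order using the Kamke/quasimonotonicity property of a decomposition function. The only cosmetic difference is that you invoke the Kamke--M\"uller comparison theorem as a black box (applying quasimonotonicity to $e$), whereas the paper proves the comparison inline by examining the derivatives at a first touching time and applies conditions C1--C2 to the tight decomposition $\delta$ instead; both variants are valid.
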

\begin{proof}
Let $\delta$ and $d$ be such that \eqref{eq:tight} holds.  Let $\varepsilon$ and $e$ denote the embedding functions relative to $\delta$ and $d$, respectively, and let
$\Phi^{\varepsilon}$ and $\Phi^e$ denote the state transition functions of their respective embedding systems.  Choose $(\underline{x},\, \overline{x}) \in \T_{\X}$, and define 
\begin{equation}
\varphi^{e}(t) = \Phi^{e}(t;\, (\underline{x},\, \overline{x})), \:\text{ and }\:\varphi^{\varepsilon}(t) = \Phi^{\varepsilon}(t;\, (\underline{x},\, \overline{x})),
\end{equation}
where we write $\varphi^e =: (\underline{\varphi}^e,\,\overline{\varphi}^e)$ and $\varphi^{\varepsilon} =: (\underline{\varphi}^{\varepsilon},\,\overline{\varphi}^{\varepsilon})$. Then 
\begin{equation}
    \dot{\varphi}^e = e(\underline{\varphi}^{e},\,\overline{\varphi}^{e}), \:\text{ and }\: \dot{\varphi}^{\varepsilon} = \varepsilon(\underline{\varphi}^{\varepsilon},\, \overline{\varphi}^{\varepsilon}).
\end{equation}
We now show that $\varphi^e(0)\preceq_{\rm SE} \varphi^{\varepsilon}(0)$ implies $\varphi^e(t)\preceq_{\rm SE} \varphi^e(t)$ for all $t\geq 0$. Assume there exists a time $T \geq 0$ such that 
\begin{equation}\label{eq:cond}
    \varphi_i^e(T) = \varphi_i^{\varepsilon}(T) \:\text{ and }\: \varphi^e(T) \preceq_{\rm SE} \varphi^{\varepsilon}(T)
\end{equation}
for some $i\in\{1,\cdots,2n\}$. Consider first the case that $i \in \{1,\,\cdots,\, n\}$. Then
\begin{equation}\label{kamke}
    \begin{split}
        d_i( \underline{\varphi}^e(T),\, \underline{w},\, \overline{\varphi}^e(T),\, \overline{w})
        & \leq
        \delta_i( \underline{\varphi}^e(T),\, \underline{w},\, \overline{\varphi}^e(T),\, \overline{w}), \\
        & \leq
        \delta_i( \underline{\varphi}^{\varepsilon}(T),\, \underline{w},\, \overline{\varphi}^{\varepsilon}(T),\, \overline{w}),
    \end{split}
\end{equation}
where the first inequality comes from the fact that $\delta$ is a tight decomposition function for \eqref{eq1}, and where the second inequality comes from Conditions C1 and C2 in Remark \ref{rem:kamke}. 
Thus we now have $\dot{\varphi}_i^{e}(T) \leq \dot{\varphi}_i^{\varepsilon}(T).$
If instead \eqref{eq:cond} holds for some $i\in\{n+1,\cdots,2n\}$, by a symmetric argument, $\dot{\varphi}_i^{e}(T) \geq \dot{\varphi}_i^{\varepsilon}(T)$.  
Therefore, always $\varphi^e(t)\preceq_{\rm SE} \varphi^{\varepsilon}(t)$, which is equivalently to \eqref{prop:res} by \eqref{eq:order_to_box}. This completes the proof.
\end{proof}

In the following theorem, we show that all continuous-time systems with disturbances as in \eqref{eq1} are mixed-monotone and we present a construction for tight decomposition functions.

\begin{theorem}\label{thm1}
Any system of the form \eqref{eq1} is mixed-monotone with respect to $\delta:\T \rightarrow \R^n$ constructed elementwise according to
\begin{multline}\label{opt_decomp}
    \delta_i(x,\, w,\, \widehat{x},\, \widehat{w}) = \\
    \begin{cases}
    \min\limits_{
    \substack{
        y \in  [x,\, \widehat{x}]\\
        y_i = x_i \\
        z\in [w,\, \widehat{w}]
    }
    }
    F_i(y,\, z) & \text{if $x \preceq \widehat{x}$ and $w \preceq \widehat{w},$} \\
    \vspace{-.3cm}
    \\
    \max\limits_{
    \substack{ y \in  [\widehat{x},\, x]\\
               y_i = x_i \\
               z \in [\widehat{w},\, w]
               }
    } 
    F_i(y,\, z) & \text{if $\widehat{x} \preceq x$ and $\widehat{w} \preceq w$}.
    \end{cases}
\end{multline}
Moreover, $\delta$ is a tight decomposition function for \eqref{eq1}.
\end{theorem}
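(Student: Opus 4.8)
The plan is to verify the two assertions separately: first that the $\delta$ of \eqref{opt_decomp} is a genuine decomposition function in the sense of Definition \ref{def1}, and second that it is tight in the sense of Definition \ref{def:tightdecomp}. For the first part I would avoid the derivative conditions 2--4 of Definition \ref{def1} entirely and instead verify the equivalent Kamke conditions C1 and C2 from Remark \ref{rem:kamke}, which are derivative-free and therefore sidestep any need to locate where $\partial\delta$ exists. I would begin with the bookkeeping: on $\T$ exactly one of the two branches applies, and the branches coincide on the diagonal $x=\widehat{x}$, $w=\widehat{w}$, so $\delta$ is well-defined; moreover, since $x,\widehat{x}$ are finite points of $\X$, the feasible set $\{y : x\preceq y\preceq\widehat{x},\ y_i=x_i\}\times[w,\widehat{w}]$ is compact and the continuous $F_i$ attains its extremum. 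Condition 1 of Definition \ref{def1} is then immediate, since setting $\widehat{x}=x$ and $\widehat{w}=w$ collapses the feasible set to the single point $(x,w)$. For the required local Lipschitz continuity I would invoke the standard regularity theory of marginal (optimal-value) functions: $F_i$ is locally Lipschitz and the feasible-set map depends Lipschitz-continuously, in the Hausdorff metric, on $(x,w,\widehat{x},\widehat{w})$, so each branch is locally Lipschitz on its closed region, and the two branches glue to a locally Lipschitz function on $\T$ because they agree on the shared diagonal.

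Next I would establish C1 and C2 for the cooperative branch purely through monotonicity of a minimum under inclusion of its feasible region. Raising the first state argument from $x$ to $y$ with $x_i=y_i$ and the disturbance from $w$ to $v$ replaces the boxes $[x,\widehat{x}]$, $[w,\widehat{w}]$ by the smaller boxes $[y,\widehat{x}]$, $[v,\widehat{w}]$ while leaving the face constraint untouched, so the minimum can only increase, which is exactly C1; raising the hat arguments enlarges the feasible box and hence decreases the minimum, which is C2. The competitive branch follows by the symmetric statement with minima replaced by maxima. This shows $\delta$ is a decomposition function.

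For tightness, fix an arbitrary decomposition function $d$ and a point with $x\preceq\widehat{x}$, $w\preceq\widehat{w}$. The core of the argument is a chain of inequalities obtained from the Kamke conditions applied to $d$. For any $(y,z)$ feasible in the minimization defining $\delta_i(x,w,\widehat{x},\widehat{w})$, I first use Condition 1 to write $F_i(y,z)=d_i(y,z,y,z)$, then apply C1 to raise the first argument from $x$ to $y$ (legitimate because the constraint forces $y_i=x_i$ and because $w\preceq z$) and C2 to lower the hat argument from $(\widehat{x},\widehat{w})$ down to $(y,z)$, giving $d_i(x,w,\widehat{x},\widehat{w})\le d_i(y,z,\widehat{x},\widehat{w})\le d_i(y,z,y,z)=F_i(y,z)$; minimizing over $(y,z)$ yields the first inequality of \eqref{eq:tight}. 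A mirror-image chain for $\delta_i(\widehat{x},\widehat{w},x,w)$ — raising the first argument up to $\widehat{x}$ via C1 and then lowering the hat down to $(x,w)$ via C2 — produces $F_i(y,z)\le d_i(\widehat{x},\widehat{w},x,w)$ for every $(y,z)$ feasible in the corresponding maximization, hence the second inequality of \eqref{eq:tight}. At each step I would check that the intermediate quadruples remain in $\T$, which holds because $y$ and $z$ stay sandwiched between the ordered endpoints.

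The hard part will not be the monotonicity or the tightness chains, which are clean once phrased through feasible-set inclusions and the Kamke conditions; rather, it will be the local Lipschitz regularity of $\delta$. In particular, I expect the delicate point to be justifying that the two branches glue across the diagonal without a blow-up in the local Lipschitz constant, which requires the Hausdorff-Lipschitz dependence of the constraint-set map to hold uniformly on a neighborhood of each point of $\T$ and an appeal to marginal-function regularity results for that gluing to go through.
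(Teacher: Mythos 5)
Your proposal is correct and follows essentially the same route as the paper's proof: condition 1 by collapse of the feasible set, the Kamke conditions C1--C2 by monotonicity of the min/max under feasible-set inclusion, Lipschitz continuity via Lipschitz dependence of the optimal value on the (Hausdorff-Lipschitz) constraint set, and tightness via the chain $d_i(x,w,\widehat{x},\widehat{w})\le d_i(y,z,\widehat{x},\widehat{w})\le d_i(y,z,y,z)=F_i(y,z)$ minimized over feasible $(y,z)$, which is the paper's argument written pointwise instead of in the southeast order. The only cosmetic difference is that the paper proves the marginal-function Lipschitz bound by an explicit construction pairing minimizers $y^1\mapsto y^2$ with $\|y^1-y^2\|_1\le\|x^1-x^2\|_1$ rather than citing general regularity results.
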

\begin{proof}
We begin by establishing that $\delta$ from \eqref{opt_decomp} is Lipschitz continuous; this is done by showing that $\delta_i$ is Lipschitz in $x$, and Lipschitz continuity holds with respect to other arguments by analogous reasoning.
Let $x^1, x^2,\widehat{x}\in\X$,  $w,\widehat{w}\in \W$, where we assume without loss of generality that $x^1\preceq \widehat{x}$, $x^2\preceq \widehat{x}$, and $w\preceq \widehat{w}$. Observe that for any $y^1\in [x^1,\widehat{x}]$ with $y^1_i=x^1_i$, there exists $y^2\in[x^2,\widehat{x}]$ with $y^2_i=x^2_i$ such that $\|y^1-y^2\|_1\leq \|x^1-x^2\|_{1}$, and vice-versa, where $\|\cdot\|_{1}$ denotes the usual one-norm on $\mathbb{R}^n$. In particular, for any minimizer $(y^1,z)$ that achieves the value of $\delta_i(x^1,w,\widehat{x},\widehat{w})$ in the definition \eqref{opt_decomp}, there exists a point $y^2$ so that $F_i(y^2,z)$ upper bounds $\delta_i(x^2,w,\widehat{x},\widehat{w})$ with $\|y^1-y^2\|_1\leq \|x^1-x^2\|_1$, and vice-versa.  It follows then that $\|\delta_i(x^1,w,\widehat{x},\widehat{w})-\delta_i(x^2,w,\widehat{x},\widehat{w})\|_1 \leq L\|x^1-x^2\|_{1}$ where $L$ is a  Lipschitz constant for $F$ applicable on a neighborhood of $[x^1,\widehat{x}]\cup [x^2,\widehat{x}]$.  Thus $\delta_i$ is Lipschitz in $x$, and therefore $\delta$ is Lipschitz in $x,\, \widehat{x},\, w,\, \widehat{w}$.

We next show that $\delta$ is a decomposition function for \eqref{eq1}. 
Trivially, $\delta_{i}(x, w, x, w)=F_i(x,w)$ for all $i$ and all $x\in \X$, $w\in\W$. 
We show that $\delta$ satisfies Conditions 2--4 from Definition \ref{def1} by showing that $\delta_i$ satisfies the Kamke conditions in Remark \ref{rem:kamke}. 
\change{Choose $(x,\, w,\, \widehat{x},\, \widehat{w}) \in \T$ and $(y,\, v) \in \X\times\W$ such that $(y,\, v,\, \widehat{x},\, \widehat{w}) \in \T$, $x\preceq y$, $x_i=y_i$, and $w\preceq v$}. Then $\delta_i(x, w, \widehat{x}, \widehat{w})\leq \delta_i(y, v, \widehat{x}, \widehat{w})$ follows from the $\mbox{min}/\mbox{max}$ construction of \eqref{opt_decomp}.  This proves C1, and C2 is proven analogously.
\textit{Thus}, \eqref{eq1} is mixed-monotone with respect to $\delta$.

Lastly, we show that $\delta$ is a tight decomposition function for \eqref{eq1}. Let $d:\T \rightarrow \R^n$ be another decomposition function for \eqref{eq1} and choose $(\underline{x},\, \overline{x})\in \T_{\X}$ and $(\underline{w},\, \overline{w})\in \T_{\W}$.  Additionally, choose $x \in [\underline{x},\, \overline{x}]$ and $w\in [\underline{w},\, \overline{w}]$.  Then $(\underline{x},\,\overline{x}) \preceq_{\rm SE} (x,\,x)$ and $(\underline{w},\,\overline{w}) \preceq_{\rm SE} (w,\,w)$, and therefore
\begin{equation}\label{corbin}
    \begin{bmatrix}
    d(\underline{x}, \underline{w}, \overline{x}, \overline{w}) \\
    d(\overline{x}, \overline{w}, \underline{x}, \underline{w})
    \end{bmatrix}
    \preceq_{\rm SE}
    \begin{bmatrix}
    d(x, w, x, w) \\
    d(x, w, x,w)
    \end{bmatrix}
    =
    \begin{bmatrix}
    F(x, w) \\
    F(x, w)
    \end{bmatrix}.
\end{equation}
Since \eqref{corbin} holds for all  $x \in [\underline{x},\, \overline{x}]$ and all $w\in [\underline{w},\, \overline{w}]$ we now have
\begin{equation*}
    \begin{bmatrix}
    d(\underline{x}, \underline{w}, \overline{x}, \overline{w}) \\
    d(\overline{x}, \overline{w}, \underline{x}, \underline{w})
    \end{bmatrix}
    \preceq_{\rm SE}
    \begin{bmatrix}
    \min\limits_{
    \substack{
        y \in  [\underline{x},\, \overline{x}], y_i = \underline{x}_i, z\in [\underline{w},\, \overline{w}] \\{}
    }
    }
    F(y,\,z) \\
    \max\limits_{
    \substack{
        y \in  [\underline{x},\, \overline{x}], y_i = \underline{x}_i, z\in [\underline{w},\, \overline{w}]
    }
    }
    F(y,\,z) \\
    \end{bmatrix}, 
\end{equation*}
and thus 
\begin{equation*}
    \begin{bmatrix}
    d(\underline{x}, \underline{w}, \overline{x}, \overline{w}) \\
    d(\overline{x}, \overline{w}, \underline{x}, \underline{w})
    \end{bmatrix}
    \preceq_{\rm SE}
    \begin{bmatrix}
    \delta(\underline{x}, \underline{w}, \overline{x}, \overline{w}) \\
    \delta(\overline{x}, \overline{w}, \underline{x}, \underline{w})
    \end{bmatrix}.
\end{equation*}
Therefore $\delta$ is a tight decomposition function for \eqref{eq1} as $(\underline{x},\, \overline{x})\in \T_{\X}$ and $(\underline{w},\, \overline{w})\in \T_{\W}$ were selected arbitrarily. This completes the proof.
\end{proof}

We next demonstrate the applicability of Theorem \ref{thm1} through an example.

\change{
\begin{example}\label{ex1}
The system 
\begin{equation}\label{examsys}
    \begin{bmatrix}
    \dot{x}_1 \\
    \dot{x}_2
    \end{bmatrix}
     = F(x) = 
    \begin{bmatrix}
    \:|x_1 - x_2| \\
    -x_1
    \end{bmatrix}
\end{equation}
with $\X = \R^2$ is mixed-monotone with respect to
\begin{equation}\label{examdec}
\begin{split}
    \delta_1(x,\, \widehat{x}) &= \begin{cases}
    0 & 
    \text{if $x_2 \leq x_1 \leq \widehat{x}_2$}, \\
    x_2 - x_1 & 
    \text{if $2x_1\leq \mbox{min}\{ 2x_2,\, x_2+\widehat{x}_2$\} }, \\
    x_1 - \widehat{x}_2 &
    \text{if $2 x_1 \geq \mbox{min}\{ 2\widehat{x}_2,\, x_2+\widehat{x}_2$\},}
    \end{cases} \\
    \delta_2(x,\, \widehat{x}) &= -\widehat{x}_1,
\end{split}
\end{equation}
where $\delta$ is a tight decomposition function and solves \eqref{opt_decomp}.

To further contrast our results to their discrete-time analog in \cite{ozaything}, an alternative decomposition function is obtained by applying the construction presented in  \cite[Theorem 1]{ozaything}, which is a tight decomposition for the discrete-time system $x^+=F(x)$, but is generally not a tight decomposition function for the continuous-time system $\dot{x}=F(x)$, as demonstrated in Figure \ref{fig2}.
\end{example}
}

\begin{figure}
%
%
\definecolor{mycolor1}{rgb}{1.00000,0.00000,1.00000}%
\begin{tikzpicture}

\begin{axis}[%
width=5.238cm,
height=3.5cm,
at={(0cm,0cm)},
scale only axis,
xmin=-1.5,
xmax=4.5,
xtick={-1.5,    0,  1.5,    3,  4.5},
xlabel style={font=\color{white!15!black}},
xlabel={$x_1$},
ymin=-2.5,
ymax=2.5,
ytick={-2.5,    0,  2.5},
ylabel style={font=\color{white!15!black}},
ylabel={$x_2$},
axis background/.style={fill=white},
axis x line*=bottom,
axis y line*=left,
xmajorgrids,
ymajorgrids,
axis on top
]

\addplot[area legend, draw=black, fill=mycolor1, fill opacity=0.2, forget plot]
table[row sep=crcr] {%
x	y\\
-1	-2.30657041083348\\
4.32953773011248	-2.30657041083348\\
4.32953773011248	2.01\\
-1	2.01\\
}--cycle;

\addplot[area legend, draw=black, fill=white, forget plot]
table[row sep=crcr] {%
x	y\\
-0.731019083635831	-2.02327436916925\\
3.80837226572762	-2.02327436916925\\
3.80837226572762	1.78788243552011\\
-0.731019083635831	1.78788243552011\\
}--cycle;

\addplot[area legend, draw=black, fill=blue, fill opacity=0.2, forget plot]
table[row sep=crcr] {%
x	y\\
-0.731019083635831	-2.02327436916925\\
3.80837226572762	-2.02327436916925\\
3.80837226572762	1.78788243552011\\
-0.731019083635831	1.78788243552011\\
}--cycle;

\addplot[area legend, draw=black, fill=white, forget plot]
table[row sep=crcr] {%
x	y\\
-0.116917991093013	0.538060338394386\\
-0.104610834135854	0.481422408037082\\
-0.0923036771786944	0.424784477679779\\
-0.0799965202215351	0.368146547322475\\
-0.0676893632643759	0.311508616965171\\
-0.0553822063072166	0.254870686607867\\
-0.0430750493500574	0.198232756250563\\
-0.0307678923928981	0.14159482589326\\
-0.0184607354357389	0.0849568955359557\\
-0.00615357847857962	0.0283189651786519\\
0.200440645564611	-0.106488124693118\\
0.601321936693834	-0.319464374079356\\
1.00220322782306	-0.532440623465593\\
1.40308451895228	-0.745416872851829\\
1.8039658100815	-0.958393122238067\\
2.20484710121073	-1.1713693716243\\
2.60572839233995	-1.38434562101054\\
3.00660968346917	-1.59732187039678\\
3.4074909745984	-1.81029811978301\\
3.80837226572762	-2.02327436916925\\
3.7018841410345	-1.92932184829776\\
3.59539601634138	-1.83536932742626\\
3.48890789164826	-1.74141680655477\\
3.38241976695514	-1.64746428568328\\
3.27593164226203	-1.55351176481179\\
3.16944351756891	-1.45955924394029\\
3.06295539287579	-1.3656067230688\\
2.95646726818267	-1.27165420219731\\
2.84997914348955	-1.17770168132582\\
2.74349101879643	-1.08374916045432\\
2.63700289410331	-0.989796639582829\\
2.53051476941019	-0.895844118711337\\
2.42402664471708	-0.801891597839845\\
2.31753852002396	-0.707939076968351\\
2.21105039533084	-0.613986556096859\\
2.10456227063772	-0.520034035225365\\
1.9980741459446	-0.426081514353873\\
1.89158602125148	-0.33212899348238\\
1.78509789655836	-0.238176472610886\\
1.42870332258613	-0.0484661437507417\\
1.14903307174389	0.102446929656295\\
0.937816552714142	0.220943829009225\\
0.785095310805562	0.314305650156099\\
0.68016337982962	0.389937266223042\\
0.612827612173928	0.454474300891685\\
0.574167624405822	0.513237603786257\\
0.556521073830125	0.570021433951443\\
0.544213916872966	0.626659364308747\\
0.531906759915807	0.683297294666051\\
0.519599602958648	0.739935225023355\\
0.507292446001488	0.796573155380658\\
0.494985289044329	0.853211085737963\\
0.48267813208717	0.909849016095266\\
0.47037097513001	0.96648694645257\\
0.458063818172851	1.02312487680987\\
0.445756661215692	1.07976280716718\\
0.433449504258533	1.13640073752448\\
0.421142347301374	1.19303866788179\\
0.392823382122721	1.15856612422455\\
0.36450441694407	1.12409358056732\\
0.336185451765418	1.08962103691009\\
0.307866486586766	1.05514849325286\\
0.279547521408114	1.02067594959563\\
0.251228556229462	0.986203405938396\\
0.22290959105081	0.951730862281165\\
0.194590625872158	0.917258318623933\\
0.166271660693506	0.882785774966702\\
0.137952695514854	0.84831323130947\\
0.109633730336202	0.813840687652239\\
0.0813147651575505	0.779368143995007\\
0.0529957999788986	0.744895600337774\\
0.0246768348002468	0.710423056680543\\
-0.00364213037840535	0.675950513023312\\
-0.031961095557057	0.641477969366081\\
-0.0602800607357091	0.607005425708849\\
-0.088599025914361	0.572532882051618\\
-0.116917991093013	0.538060338394386\\
}--cycle;

\addplot[area legend, draw=black, fill=red, forget plot]
table[row sep=crcr] {%
x	y\\
-1	0\\
1	0\\
1	1\\
-1	1\\
}--cycle;

\addplot[area legend, draw=black, fill=green, fill opacity=0.8, forget plot]
table[row sep=crcr] {%
x	y\\
-0.116917991093013	0.538060338394386\\
-0.104610834135854	0.481422408037082\\
-0.0923036771786944	0.424784477679779\\
-0.0799965202215351	0.368146547322475\\
-0.0676893632643759	0.311508616965171\\
-0.0553822063072166	0.254870686607867\\
-0.0430750493500574	0.198232756250563\\
-0.0307678923928981	0.14159482589326\\
-0.0184607354357389	0.0849568955359557\\
-0.00615357847857962	0.0283189651786519\\
0.200440645564611	-0.106488124693118\\
0.601321936693834	-0.319464374079356\\
1.00220322782306	-0.532440623465593\\
1.40308451895228	-0.745416872851829\\
1.8039658100815	-0.958393122238067\\
2.20484710121073	-1.1713693716243\\
2.60572839233995	-1.38434562101054\\
3.00660968346917	-1.59732187039678\\
3.4074909745984	-1.81029811978301\\
3.80837226572762	-2.02327436916925\\
3.7018841410345	-1.92932184829776\\
3.59539601634138	-1.83536932742626\\
3.48890789164826	-1.74141680655477\\
3.38241976695514	-1.64746428568328\\
3.27593164226203	-1.55351176481179\\
3.16944351756891	-1.45955924394029\\
3.06295539287579	-1.3656067230688\\
2.95646726818267	-1.27165420219731\\
2.84997914348955	-1.17770168132582\\
2.74349101879643	-1.08374916045432\\
2.63700289410331	-0.989796639582829\\
2.53051476941019	-0.895844118711337\\
2.42402664471708	-0.801891597839845\\
2.31753852002396	-0.707939076968351\\
2.21105039533084	-0.613986556096859\\
2.10456227063772	-0.520034035225365\\
1.9980741459446	-0.426081514353873\\
1.89158602125148	-0.33212899348238\\
1.78509789655836	-0.238176472610886\\
1.42870332258613	-0.0484661437507417\\
1.14903307174389	0.102446929656295\\
0.937816552714142	0.220943829009225\\
0.785095310805562	0.314305650156099\\
0.68016337982962	0.389937266223042\\
0.612827612173928	0.454474300891685\\
0.574167624405822	0.513237603786257\\
0.556521073830125	0.570021433951443\\
0.544213916872966	0.626659364308747\\
0.531906759915807	0.683297294666051\\
0.519599602958648	0.739935225023355\\
0.507292446001488	0.796573155380658\\
0.494985289044329	0.853211085737963\\
0.48267813208717	0.909849016095266\\
0.47037097513001	0.96648694645257\\
0.458063818172851	1.02312487680987\\
0.445756661215692	1.07976280716718\\
0.433449504258533	1.13640073752448\\
0.421142347301374	1.19303866788179\\
0.392823382122721	1.15856612422455\\
0.36450441694407	1.12409358056732\\
0.336185451765418	1.08962103691009\\
0.307866486586766	1.05514849325286\\
0.279547521408114	1.02067594959563\\
0.251228556229462	0.986203405938396\\
0.22290959105081	0.951730862281165\\
0.194590625872158	0.917258318623933\\
0.166271660693506	0.882785774966702\\
0.137952695514854	0.84831323130947\\
0.109633730336202	0.813840687652239\\
0.0813147651575505	0.779368143995007\\
0.0529957999788986	0.744895600337774\\
0.0246768348002468	0.710423056680543\\
-0.00364213037840535	0.675950513023312\\
-0.031961095557057	0.641477969366081\\
-0.0602800607357091	0.607005425708849\\
-0.088599025914361	0.572532882051618\\
-0.116917991093013	0.538060338394386\\
}--cycle;
\end{axis}
\end{tikzpicture}%
    \caption{
    \change{
    Approximating forward reachable sets for \eqref{examsys} from the set of initial conditions $\X_0 = [-1,\, 1]\times [0,\, 1]$, shown in red.  $R^{F}(1;\, \X_0)$ is computed via exhaustive simulation and is shown in green.
    Hyperrectangular over-approximations of $R^{F}(1/2;\, \X_0)$ are computed from \eqref{examdec} and \cite[Theorem 1]{ozaything} and are shown in blue and pink, respectively.
    }
    }
    \label{fig2}
\end{figure}
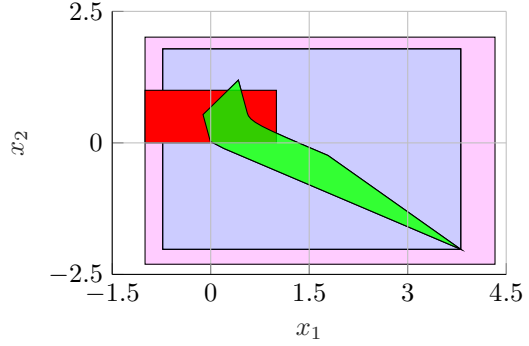

\section{Under-Approximating Reachable Sets via Mixed-Monotonicity}
\change{
We next show how under-approximations of reachable sets are computed via the mixed-monotonicity property. As in \eqref{fwd_reach}, let $R^{F}(T;\, \X_0)$ denote the time-$T$ forward reachable set of \eqref{eq1} from the hyperrectangular set of initial conditions $\X_0$.
}

\begin{theorem}\label{thm2}
Let \eqref{backward} be mixed-monotone with respect to $D$, and let $\X_0 = [\underline{x},\, \overline{x}]$ for some $\underline{x}\preceq \overline{x}$. Construct the system
\begin{equation}\label{under_embedding}
    \begin{bmatrix}
    \dot{x} \\ \dot{\widehat{x}}
    \end{bmatrix}
    =
    \Gamma(x,\, \widehat{x})
    =
    \begin{bmatrix}
    -D(x,\, \underline{w},\, \widehat{x},\, \overline{w}) \\ 
    -D(\widehat{x},\, \overline{w},\, x,\, \underline{w})
    \end{bmatrix}
\end{equation}
with state transition function $\Phi^{\Gamma}$.  If $\Phi^{\Gamma}(t;\, (\underline{x},\, \overline{x})) \in \T_{\X}$ for all $0 \leq t \leq T$ then $\rect{\Phi^{\Gamma}(T;\, (\underline{x},\, \overline{x}))} \subseteq R^{F}(T;\, \X_0).$
\end{theorem}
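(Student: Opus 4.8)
The plan is to reduce the set inclusion to a pointwise forward-reachability claim and then transfer it to the backward dynamics \eqref{backward}, where the mixed-monotonicity hypothesis on $D$ lets me invoke Proposition \ref{prop:p1}. Write $(\underline{a},\, \overline{a}) := \Phi^{\Gamma}(T;\, (\underline{x},\, \overline{x}))$, which lies in $\T_{\X}$ by hypothesis, so that $\rect{\Phi^{\Gamma}(T;\, (\underline{x},\, \overline{x}))} = [\underline{a},\, \overline{a}]$. It then suffices to show that every $p \in [\underline{a},\, \overline{a}]$ satisfies $p \in R^{F}(T;\, \X_0)$. Using the time-reversal duality recalled before the theorem---that $p = \Phi(T;\, x_0,\, \mathbf{w})$ holds if and only if $x_0 = \Phi'(T;\, p,\, \mathbf{w}')$ with $\mathbf{w}'(t) = \mathbf{w}(T-t)$---this is equivalent to producing a disturbance under which the backward trajectory of \eqref{backward} started at $p$ exists on $[0,\, T]$ and terminates in $\X_0$.

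The key observation is that $\Gamma = -e$, where $e$ is the embedding function relative to $D$ for \eqref{backward}; hence \eqref{under_embedding} is exactly the time reversal of that embedding system. Reversing time in the identity $(\underline{a},\, \overline{a}) = \Phi^{\Gamma}(T;\, (\underline{x},\, \overline{x}))$ therefore gives $\Phi^{e}(T;\, (\underline{a},\, \overline{a})) = (\underline{x},\, \overline{x})$, where $\Phi^{e}$ denotes the state transition function of the embedding system \eqref{embedding} built from $D$ for \eqref{backward}. Applying Proposition \ref{prop:p1} to \eqref{backward}---which is mixed-monotone with respect to $D$---from the hyperrectangle $[\underline{a},\, \overline{a}]$ then yields $R^{\mathrm{bwd}}(T;\, [\underline{a},\, \overline{a}]) \subseteq \rect{\Phi^{e}(T;\, (\underline{a},\, \overline{a}))} = [\underline{x},\, \overline{x}] = \X_0$, where $R^{\mathrm{bwd}}$ denotes the reachable set of \eqref{backward}. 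Consequently, any backward trajectory of \eqref{backward} emanating from $p \in [\underline{a},\, \overline{a}]$ that survives to time $T$ must land inside $\X_0$; reversing it via the duality above exhibits the required forward trajectory from $\X_0$ to $p$, giving $p \in R^{F}(T;\, \X_0)$ and closing the inclusion.

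The step I expect to be the main obstacle is not the algebra above but the \emph{existence} claim---that for each $p$ at least one backward trajectory actually survives on all of $[0,\, T]$ rather than escaping $\X$ in finite time. This is precisely where the hypothesis $\Phi^{\Gamma}(t;\, (\underline{x},\, \overline{x})) \in \T_{\X}$ for all $0 \le t \le T$ is used: time-reversing it gives $\Phi^{e}(t;\, (\underline{a},\, \overline{a})) = \Phi^{\Gamma}(T - t;\, (\underline{x},\, \overline{x})) \in \T_{\X}$ for all such $t$, so the over-approximating tube $\rect{\Phi^{e}(t;\, (\underline{a},\, \overline{a}))}$ is a well-defined sub-hyperrectangle of $\X$ over the whole horizon (it is contained in $\X$ since $\X$ is itself a hyperrectangle). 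Fixing any admissible disturbance, e.g.\ the constant signal $\underline{w}$, and applying Proposition \ref{prop:p1} confines the backward trajectory from $p$ to this compact tube inside $\X$ for as long as it exists, so the standard no-escape argument rules out finite-time blow-up and extends the trajectory to time $T$. I would treat this confinement-implies-existence step, together with verifying the tube remains within $\X$, as the technical heart of the proof.
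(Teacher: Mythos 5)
Your proposal is correct and follows essentially the same route as the paper's proof: identify \eqref{under_embedding} as the time reversal of the embedding system for $D$, conclude $\Phi^{E}(T;\,(\underline{a},\,\overline{a})) = (\underline{x},\,\overline{x})$, and apply Proposition \ref{prop:p1} to the backward dynamics so that every backward trajectory from a point of $[\underline{a},\,\overline{a}]$ lands in $\X_0$, which reverses to the desired forward trajectory. Your added discussion of why the backward trajectory survives on all of $[0,\,T]$ (confinement to the compact tube inside $\X$) is a detail the paper handles implicitly via its existence conventions, not a different argument.
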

\begin{proof}
Let $E$ denote the embedding function relative to $D$ and let $\Phi^{E}$ denote the state transition function of its embedding system.  \change{Then for all $(x,\, \widehat{x}),\, (y,\, \widehat{y}) \in \T_{\X}$ and all $T \geq 0$ we have that $\Phi^{E}(T;\, (x,\, \widehat{x})) = (y,\, \widehat{y})$ if and only if $\Phi^{\Gamma}(T;\, (y,\, \widehat{y})) = (x,\, \widehat{x})$.}

We prove \change{Theorem \ref{thm2}} by showing that for all $y \in \rect{\Phi^{\Gamma}(T; (\underline{x},\, \overline{x}))}$ there exists an $x \in \X_0$ and a disturbance input $\mathbf{w}:[0,\, T] \rightarrow \W$ such that $y = \Phi(T;\, x,\, \mathbf{w})$.
Define $\varphi(t) := \Phi^{\Gamma}(t;\, (\underline{x},\, \overline{x}))$ where we let $\varphi(t) =: (\underline{\varphi}(t),\, \overline{\varphi}(t))$.  Choose $T \geq 0$ and $y \in \rect{\Phi^{\Gamma}(T; (\underline{x},\, \overline{x}))}=[\underline{\varphi}(T),\, \overline{\varphi}(T)]$ where we have $(\underline{\varphi}(T),\, \overline{\varphi}(T)) \preceq_{\rm SE} (y,\, y)$ by \eqref{eq:order_to_box}.  Then 
\begin{equation}
    \Phi^{E}(T;\, (\underline{\varphi}(T),\, \overline{\varphi}(T))) \preceq_{\rm SE} \Phi^{E}(T;\, (y,\, y))
\end{equation}
follows from the monotonicity of the embedding system relative to $E$. 
As a result of Proposition \ref{prop:p1} we have that for any $\mathbf{w}':[0,\, T] \rightarrow \W$,  $\Phi'(T;\, y,\, \mathbf{w}')\in \Phi^{E}(T;\, (\underline{\varphi}(T),\, \overline{\varphi}(T))) = [\underline{x},\, \overline{x}]$ where $\Phi'$ is taken to be the state transition function of \eqref{backward}. Take $x=\Phi'(T;\, y,\, \mathbf{w}')$ and define $\mathbf{w}(t) := \mathbf{w}'(T-t)$. Then $y = \Phi(T;\, x,\, \mathbf{w})$.  This completes the proof.
\end{proof}

\change{
While $\Gamma$ from \eqref{under_embedding} is constructed from the decomposition function of the backward-time dynamics \eqref{backward}, trajectories of \eqref{under_embedding} can evolve from $\T_{\X}$ to $(\X\times \X)\backslash \T_{\X}$, unlike trajectories of \eqref{embedding}.
For this reason, $\rect{\Phi^{\Gamma}(T;\, (\underline{x},\, \overline{x}))} \subseteq R^{F}(T;\, [\underline{x},\, \overline{x}])$ only if $\Phi^{\Gamma}(t;\, (\underline{x},\, \overline{x})) \in \T_{\X}$ for all $0 \leq t \leq T$.
}

In Theorem \ref{thm2} we show how the system \eqref{under_embedding}, which is constructed from a decomposition function for \eqref{backward}, is used to under-approximate forward reachable sets for the system \eqref{eq1}.  As a consequence of Theorem \ref{thm1} we have that \eqref{backward} is mixed-monotone, and we next provide a special case for which a tight decomposition function for the backward-time system \eqref{backward} can be computed from a tight decomposition function for the forward-time system \eqref{eq1}.

\begin{specialblock}\label{spec2}
If
\begin{enumerate}
\item $\delta$ is a tight decomposition function for \eqref{eq1}, and
\item $F_i$ does not depend on $x_i$ for all $i\in\{1,\,\cdots,\, n\}$,
\end{enumerate}
then $\Delta(x,\, w,\, \widehat{x},\, \widehat{w}) := - \delta(\widehat{x},\, \widehat{w},\, x,\, w)$ is a tight decomposition function for \eqref{backward}.
\end{specialblock}

To summarise the previous results, the tight decomposition function $\delta$ from \eqref{opt_decomp} allows one to compute tight over-approximations of forward reachable sets via Proposition \ref{prop:p1}.  If $F$ satisfies the hypothesis of Special Case \ref{spec2}, then $\Delta(x,\, w,\, \widehat{x},\, \widehat{w}) := -\delta(\widehat{x},\, \widehat{w},\, x,\, w)$ allows computing over-approximations of backward reachable sets for \eqref{eq1} via \cite[Proposition 2]{Invariance4MM}, and, by analogous reasoning to that of Proposition \ref{prop:p2}, it can be additionally shown that $\Delta$ provides the \emph{tightest} possible over-approximations of backward reachable sets.  Last, note that $\Delta$ provides large under-approximations of reachable sets when used with Theorem \ref{thm2}, and thus, for systems satisfying the hypothesis of Special Case \ref{spec2}, implementing the reachability tools detailed in this paper requires only requires one computation of \eqref{opt_decomp} for each state.

\change{
\section{Numerical Example}
}
\change{The system 
\begin{equation}\label{cs_sys}
    \begin{bmatrix}
    \dot{x}_1 \\ \dot{x}_2 \\ \dot{x_3}
    \end{bmatrix}
    = F(x,w) =
    \begin{bmatrix}
    w_1x_2^2 - x_2 + w_2 \\  
    x_3 + 2\\
    x_1 - x_2 - w_1^3
    \end{bmatrix}
\end{equation}
with $\X = \R^3$ and $\W \subset \R^2$ is mixed-monotone with respect to 
\begin{multline}\label{max7}
    \delta_1(x,\, w,\, \widehat{x},\, \widehat{w}) = \\
    \begin{cases}
    \frac{-1\:\:}{4 w_1} + w_2 
    & \text{if $w_1 x_2 \leq \tfrac{1}{2} \leq w_1\widehat{x}_2$,} \\
    w_1 x_2^2 - x + w_2
    & \text{if $\tfrac{1}{2} \leq w_1 x_2 \leq w_1\widehat{x}_2$,} \\
    & \text{or $1 \leq w_1 (x_2 + \widehat{x}_2)$,} \\
    w_1 \widehat{x}_2^2 - \widehat{x} + w_2
    & \text{if $w_1 x_2 \leq w_1 \widehat{x}_2 \leq \tfrac{1}{2}$,} \\
    & \text{or $w_1 (x_2 + \widehat{x}_2) \leq 1$,} \\
    \end{cases}
\end{multline}
\begin{equation}\label{max4}
    \begin{split}
        \delta_2(x,\,w,\, \widehat{x},\, \widehat{w}) 
        &= x_3 + 2,\\
        \delta_3(x,\,w,\, \widehat{x},\, \widehat{w}) 
        &= x_1 - \widehat{x}_2 - \widehat{w}_1^3.
    \end{split}
\end{equation}
where $\delta$ is a tight decomposition function and solves \eqref{opt_decomp}.
The second and third components of $\delta$, given in \eqref{max4}, are straightforwardly derived from \eqref{opt_decomp}, and we justify the construction of $\delta_1$ as follows:  The minimum, or maximum, of a scalar-valued function will either occur on the boundary of the function's domain or at a critical point in the interior of the domain. Note that the optimization problem \eqref{opt_decomp} is evaluated over a hyperrectangle, and the boundary of this domain is also comprised of hyperrectangles.  Thus, one can move iteratively, searching for critical points within hyperrectangles, in order to arrive at \eqref{max7}.
}

We next demonstrate how forward reachable sets are over-approximated via Proposition \ref{prop:p1} and under-approximated via Theorem \ref{thm2}.  Specifically, we take $\W = [-1/4,\, 0] \times [0,\, 1/4]$ and approximate $R^{F}(1/2;\, \X_0)$ for $\X_0 = [-1/2,\, 1/2]^3$.  An over-approximation of $R^{F}(1/2;\, \X_0)$ is computed by simulating the system \eqref{embedding}, here taken relative to $\delta$, forward in time for $T=1/2$. Additionally, note that \eqref{cs_sys} satisfies the hypothesis of Special Case \ref{spec2}, and thus \change{$\Delta(x,\, w,\, \widehat{w},\, \widehat{x}) = - \delta(\widehat{w},\, \widehat{x},\, x,\, w)$} is a tight decomposition function for the backward-time system \change{\eqref{backward}}.  An under-approximation of $R^{F}(1/2;\, \X_0)$ is computed by simulating the system \eqref{under_embedding}, here taken relative to $\Delta$, forward in time for $T=1/2$. \change{Simulation results are provided in Figure \ref{fig1}.}
\vspace{.35 cm}

\begin{figure}[t]
    \begin{subfigure}{0.49\textwidth}
        \input{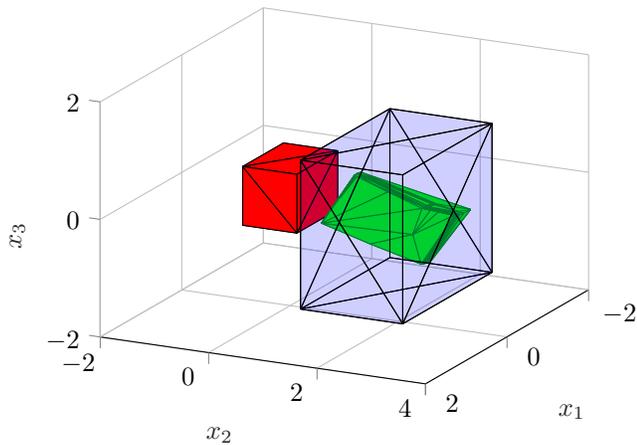}
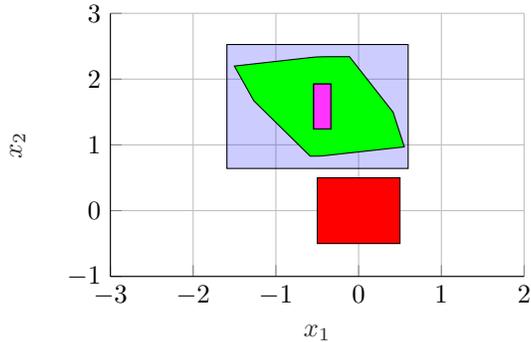
        \caption{\change{Numerical Example: Simulation results. }
        \vspace{.23cm}
        }
        \label{fig:ex1_1}
    \end{subfigure}
    ~
    \begin{subfigure}{0.49\textwidth}
%
%
\definecolor{mycolor1}{rgb}{1.00000,0.00000,1.00000}%
\begin{tikzpicture}

\begin{axis}[%
width=5.5cm,
height=3.5cm,
at={(0cm,0cm)},
scale only axis,
xmin=-3,
xmax=2,
xtick={-3, -2, -1,  0,  1,  2},
xlabel style={font=\color{white!15!black}},
xlabel={$x_1$},
ymin=-1,
ymax=3,
ytick={-1,  0,  1,  2,  3},
ylabel style={font=\color{white!15!black}},
ylabel={$x_2$},
axis background/.style={fill=white},
axis x line*=bottom,
axis y line*=left,
xmajorgrids,
ymajorgrids
]

\addplot[area legend, draw=black, fill=blue, fill opacity=0.2, forget plot]
table[row sep=crcr] {%
x	y\\
-1.59182984631072	0.641429422424022\\
0.598596587653922	0.641429422424022\\
0.598596587653922	2.52626415635019\\
-1.59182984631072	2.52626415635019\\
}--cycle;

\addplot[area legend, draw=black, fill=red, forget plot]
table[row sep=crcr] {%
x	y\\
-0.5	-0.5\\
0.5	-0.5\\
0.5	0.5\\
-0.5	0.5\\
}--cycle;

\addplot[area legend, draw=black, fill=green, forget plot]
table[row sep=crcr] {%
x	y\\
-0.586730164986126	0.829870712969223\\
-0.58665194400575	0.829804167418225\\
-0.586582726493297	0.829755358210305\\
-0.586277399872485	0.829557654727639\\
-0.586016481870774	0.829451731669768\\
-0.585378812214246	0.829262788279804\\
-0.585103435802985	0.829201383685806\\
-0.584849655102515	0.829164724806695\\
-0.584286480631046	0.829086511348727\\
-0.583191838419502	0.828936298931295\\
-0.583124994401041	0.828929720106663\\
-0.583061672224799	0.828924083824232\\
-0.582461767609049	0.828887470168227\\
-0.581480187998306	0.828834942597829\\
-0.580921006366798	0.828808765284895\\
-0.57941076969325	0.828781276858969\\
-0.578430369268246	0.828763471510794\\
-0.578180343743965	0.828761442232369\\
-0.576670118456095	0.828751314917555\\
-0.575159893168224	0.828749868158295\\
-0.574180088169588	0.828749423906616\\
-0.573622672161384	0.828749266558959\\
-0.571481985831568	0.828749266558959\\
-0.569456005494717	0.828749266558959\\
-0.569341299501753	0.828749266558959\\
-0.567315319164902	0.828749266558959\\
-0.567200613171938	0.828749266558959\\
-0.56528933882805	0.828749266558959\\
-0.56528933882805	0.828749266558959\\
-0.565174632835086	0.828749266558959\\
-0.563148652498235	0.828749266558959\\
-0.563148652498235	0.828749266558959\\
-0.563033946505271	0.828749266558959\\
-0.561122672161383	0.828749266558959\\
-0.56100796616842	0.828749266558959\\
-0.56100796616842	0.828749266558959\\
-0.558981985831568	0.828749266558959\\
-0.558981985831568	0.828749266558959\\
-0.558867279838605	0.828749266558959\\
-0.558867279838604	0.828749266558959\\
-0.556956005494717	0.828749266558959\\
-0.556956005494717	0.828749266558959\\
-0.556841299501753	0.828749266558959\\
-0.556841299501753	0.828749266558959\\
-0.554815319164902	0.828749266558959\\
-0.554700613171938	0.828749266558959\\
-0.554700613171938	0.828749266558959\\
-0.55278933882805	0.828749266558959\\
-0.552674632835086	0.828749266558959\\
-0.552674632835086	0.828749266558959\\
-0.550648652498235	0.828749266558959\\
-0.550648652498235	0.828749266558959\\
-0.550533946505271	0.828749266558959\\
-0.550533946505271	0.828749266558959\\
-0.548622672161383	0.828749266558959\\
-0.54850796616842	0.828749266558959\\
-0.546481985831568	0.828749266558959\\
-0.546367279838604	0.828749266558959\\
-0.544341299501753	0.828749266558959\\
-0.542200613171938	0.828749266558959\\
-0.533867279838604	0.828770099892293\\
-0.529700613171938	0.828780516558959\\
-0.525534467338605	0.828811766558959\\
-0.521368321505271	0.828843016558959\\
-0.517202175671938	0.828874266558959\\
-0.504708425671938	0.829061688433959\\
-0.503659814239717	0.829079442470813\\
-0.495336894968883	0.829287512783313\\
-0.491175435333467	0.829391547939563\\
-0.490914827596445	0.829398299123904\\
-0.478446053963632	0.829865854006716\\
-0.474297588667408	0.830083665666873\\
-0.470149123371184	0.830301477327029\\
-0.46600065807496	0.830519288987185\\
-0.465418268814779	0.830558844224436\\
-0.457143119388347	0.831138389641102\\
-0.453005544675131	0.831428162349436\\
-0.448882458597332	0.831799722740036\\
-0.444759372519532	0.832171283130637\\
-0.440636286441733	0.832542843521237\\
-0.439606944374541	0.832644587359944\\
-0.427293420199733	0.834033490093584\\
0.552232902850375	0.969334349144185\\
0.55220120065438	0.969566606270461\\
0.450716067634015	1.36520765434753\\
0.416878335270303	1.49707482035959\\
0.153109590902689	1.91929632160202\\
-0.110539249152692	2.34103110609616\\
-0.113704436617246	2.34106182791836\\
-0.16537390865571	2.34110089041836\\
-0.169540575322377	2.34110089041836\\
-0.173707241989044	2.34110089041836\\
-0.17787390865571	2.34110089041836\\
-0.182040575322377	2.34110089041836\\
-0.182040575322377	2.34110089041836\\
-0.182040575322377	2.34110089041836\\
-0.185349253869078	2.34110089041836\\
-0.186207241989044	2.34110089041836\\
-0.186207241989044	2.34110089041836\\
-0.189515920535745	2.34110089041836\\
-0.19037390865571	2.34110089041836\\
-0.193682587202411	2.34110089041836\\
-0.197849253869078	2.34110089041836\\
-0.197849253869078	2.34110089041836\\
-0.202015920535745	2.34110089041836\\
-0.202015920535745	2.34110089041836\\
-0.205324599082446	2.34110089041836\\
-0.206182587202411	2.34110089041836\\
-0.209491265749112	2.34110089041836\\
-0.210349253869078	2.34110089041836\\
-0.213657932415779	2.34110089041836\\
-0.213657932415779	2.34110089041836\\
-0.217824599082446	2.34110089041836\\
-0.217824599082446	2.34110089041836\\
-0.221991265749112	2.34110089041836\\
-0.225299944295814	2.34110089041836\\
-0.226157932415779	2.34110089041836\\
-0.22946661096248	2.34110089041836\\
-0.230324599082446	2.34110089041836\\
-0.233633277629147	2.34110089041836\\
-0.237799944295814	2.34110089041836\\
-0.24196661096248	2.34110089041836\\
-0.24196661096248	2.34110089041836\\
-0.246133277629147	2.34110089041836\\
-0.250299944295814	2.34110089041836\\
-0.294127278678016	2.34106945731082\\
-0.306627278678016	2.34103820731082\\
-0.343094327707978	2.3408818135859\\
-0.355591054622184	2.3407880635859\\
-0.385245384679924	2.34049918112994\\
-0.397732414923254	2.34031175925494\\
-0.421183420410624	2.33991999753136\\
-0.439096216612258	2.33948105939641\\
-0.451564112377526	2.33916895758473\\
-0.457273501694034	2.33900366309267\\
-0.470361655846763	2.33858742243861\\
-0.479372950907017	2.33826586349866\\
-0.491809406731973	2.33779832959487\\
-0.504199398444867	2.33714496461556\\
-0.516525364075096	2.33627582389413\\
-0.528767382920884	2.3351615207737\\
-0.540903368995468	2.33377333149095\\
-1.50174805862835	2.19856980456883\\
-1.26758138918208	1.67217205316997\\
-0.586730164986126	0.829870712969223\\
}--cycle;

\addplot[area legend, draw=black, fill=white, forget plot]
table[row sep=crcr] {%
x	y\\
-0.543901291435087	1.24285401053452\\
-0.335442098829544	1.24285401053452\\
-0.335442098829544	1.9259668701204\\
-0.543901291435087	1.9259668701204\\
}--cycle;

\addplot[area legend, draw=black, fill=mycolor1, fill opacity=0.8, forget plot]
table[row sep=crcr] {%
x	y\\
-0.543901291435087	1.24285401053452\\
-0.335442098829544	1.24285401053452\\
-0.335442098829544	1.9259668701204\\
-0.543901291435087	1.9259668701204\\
}--cycle;
\end{axis}
\end{tikzpicture}%
        \caption{
        \change{Projection of Figure \ref{fig:ex1_1} onto the $x_1$-$x_2$ plane.}
        \vspace{.23cm}
        }
        \label{fig:ex1_2}
    \end{subfigure}
    \caption{  
    \change{
    Approximating forward reachable sets for \eqref{cs_sys} from the set of initial conditions $\X_0 = [-1/2,\, 1/2]^{3}$, shown in red. The disturbance bound is given by  $\W = [-1/4,\, 0]\times [0,\, 1/4]$. 
    $R^{F}(1/2;\, \X_0)$ is computed via exhaustive simulation and is shown in green.
    A hyperrectangular over-approximation of $R^{F}(1/2;\, \X_0)$ is computed from the embedding system \eqref{embedding} as described in Proposition \ref{prop:p1} and is shown in light blue.  A hyperrectangular under-approximation of $R^{F}(1/2;\, \X_0)$ is computed from \eqref{under_embedding} as described in Theorem \ref{thm2} and is shown in pink. 
    }
    }
    \label{fig1}
\end{figure}

\section{Discussion and Conclusion}
A mixed-monotone system is generally mixed-monotone with respect to many decomposition functions and, as such, we can expect the system \eqref{eq1} to induce decomposition functions other than that constructed in \eqref{opt_decomp}.  However, some decomposition functions may be more/less conservative than others when used with Proposition \ref{prop:p1}, and we have shown that \eqref{opt_decomp} is the least conservative in the sense that it provides the tightest rectangular approximations of reachable sets
\change{
when used with Proposition \ref{prop:p1} and the existing analysis tool for mixed-monotone systems.
}

\change{
As demonstrated in the examples of this work, however, a closed form solution to \eqref{opt_decomp} is generally characterized piecewise and the number of pieces can scale exponentially in the dimension of the system state and disturbance spaces.
As argued in, \emph{e.g.}, \cite{coogan2015efficient}, a significant feature of mixed-monotone systems theory is that is that the computational complexity of reachable set computations scales linearly in the state dimension; this is not true when the computational complexity involved in evaluating the decomposition function scales exponentially in state and disturbance dimension.
Thus, in certain instances it may be preferable to use alternate decomposition functions to that constructed in \eqref{opt_decomp}.
}


\change{Theorem \ref{thm1} suggests a theory as to how decomposition functions should be formed in the general setting of \eqref{eq1};} in particular, we observe that a decomposition function $d$ should be \emph{large} when its first two inputs are larger than its second two inputs and \emph{small} when its first two inputs are smaller than its second two inputs. This is because $d(x,\, w,\, \widehat{x},\, \widehat{w})$ governs the movement of the first $n$ entries of $\Phi^e$ when $x\preceq \widehat{x}$ and $w\preceq \widehat{w}$ and therefore should be \emph{large} in order to attain tight approximations.  Likewise, $d(\widehat{x},\, \widehat{w}, x,\, w)$ governs the movement of the second $n$ entries of $\Phi^e$ and therefore should be \emph{small}. 
\change{
Note however that there is an intrinsic maximum/minimum evaluation of $d(x,\,w,\, \widehat{x},\, \widehat{w})$ and, as shown in Theorem \ref{thm1}, this bound is attained only if $d$ is tight.
}


\bibliography{Bibliography}

\begin{thebibliography}{10}

\bibitem{ozaything}
L.~{Yang} and N.~{Ozay}, ``Tight decomposition functions for mixed
  monotonicity,'' in {\em 2019 IEEE 58th Conference on Decision and Control
  (CDC)}, pp.~5318--5322, Dec 2019.

\bibitem{nonmonotone}
G.~Enciso, H.~Smith, and E.~Sontag, ``Nonmonotone systems decomposable into
  monotone systems with negative feedback,'' {\em J. Differential Equations J.
  Differential Equations}, vol.~22405007, pp.~205--227, 05 2006.

\bibitem{kulenovic2006global}
M.~Kulenovic and O.~Merino, ``A global attractivity result for maps with
  invariant boxes,'' {\em Discrete and Continuous Dynamical Systems Series B},
  vol.~6, no.~1, p.~97, 2006.

\bibitem{gouze1994monotone}
J.-L. Gouz{\'e} and K.~P. Hadeler, ``Monotone flows and order intervals,'' {\em
  Nonlinear World}, vol.~1, no.~1, pp.~23--34, 1994.

\bibitem{smale1976differential}
S.~Smale, ``On the differential equations of species in competition,'' {\em
  Journal of Mathematical Biology}, vol.~3, no.~1, pp.~5--7, 1976.

\bibitem{Coogan:2016}
S.~Coogan and M.~Arcak, ``Stability of traffic flow networks with a polytree
  topology,'' {\em Automatica}, vol.~66, pp.~246--253, Apr. 2016.

\bibitem{Invariance4MM}
M.~Abate and S.~Coogan, ``Computing robustly forward invariant sets for
  mixed-monotone systems,'' 2020.
\newblock Submitted to 2020 IEEE 59th Conference on Decision and Control (CDC).
  Preprint available: https://arxiv.org/abs/2003.05912.

\bibitem{smith2008global}
H.~Smith, ``Global stability for mixed monotone systems,'' {\em Journal of
  Difference Equations and Applications}, vol.~14, no.~10-11, pp.~1159--1164,
  2008.

\bibitem{smith2008monotone}
H.~Smith, {\em Monotone Dynamical Systems: An Introduction to the Theory of
  Competitive and Cooperative Systems}.
\newblock Mathematical surveys and monographs, American Mathematical Society,
  2008.

\bibitem{monotonicity}
D.~{Angeli} and E.~D. {Sontag}, ``Monotone control systems,'' {\em IEEE
  Transactions on Automatic Control}, vol.~48, pp.~1684--1698, Oct 2003.

\bibitem{7799445}
S.~{Coogan}, M.~{Arcak}, and A.~A. {Kurzhanskiy}, ``Mixed monotonicity of
  partial first-in-first-out traffic flow models,'' in {\em 2016 IEEE 55th
  Conference on Decision and Control (CDC)}, pp.~7611--7616, Dec 2016.

\bibitem{SuffMM}
L.~{Yang}, O.~{Mickelin}, and N.~{Ozay}, ``On sufficient conditions for mixed
  monotonicity,'' {\em IEEE Transactions on Automatic Control}, vol.~64,
  pp.~5080--5085, Dec 2019.

\bibitem{TIRA}
P.-J. Meyer, A.~Devonport, and M.~Arcak, ``Tira: Toolbox for interval
  reachability analysis,'' in {\em Proceedings of the 22nd ACM International
  Conference on Hybrid Systems: Computation and Control}, HSCC ’19,
  p.~224–229, Association for Computing Machinery, 2019.
\newblock An extended version of this work appears on ArXive
  https://arxiv.org/abs/1902.05204.

\bibitem{LTLandMM}
P.~{Meyer} and D.~V. {Dimarogonas}, ``Hierarchical decomposition of {LTL}
  synthesis problem for nonlinear control systems,'' {\em IEEE Transactions on
  Automatic Control}, vol.~64, pp.~4676--4683, Nov 2019.

\bibitem{maxverifacation}
M.~{Dutreix} and S.~{Coogan}, ``Efficient verification for stochastic mixed
  monotone systems,'' in {\em 2018 ACM/IEEE 9th International Conference on
  Cyber-Physical Systems (ICCPS)}, pp.~150--161, April 2018.

\bibitem{dutreix2019specificationguided}
M.~Dutreix and S.~Coogan, ``Specification-guided verification and abstraction
  refinement of mixed monotone stochastic systems,'' 2019.
\newblock In submission. Preprint available: https://arxiv.org/abs/1903.02191.

\bibitem{coogan2015efficient}
S.~Coogan and M.~Arcak, ``Efficient finite abstraction of mixed monotone
  systems,'' in {\em Proceedings of the 18th International Conference on Hybrid
  Systems: Computation and Control}, pp.~58--67, 2015.

\end{thebibliography}
\bibliographystyle{ieeetr}
\end{document}